\documentclass[12pt]{llncs}


\newcommand{\Mp}{M_{\phi}}


\begin{document}

\title{Hardness Results for the Gapped Consecutive-Ones Property
  Problem}

\author{Cedric Chauve\inst{1}, J{\' a}n Ma{\v n}uch\inst{1,2} and Murray Patterson\inst{2}}

\institute{Department of Mathematics, Simon Fraser University,
  Burnaby, BC, Canada \and School of Computing Science, Simon Fraser
  University, Burnaby, BC, Canada}

\maketitle


\begin{abstract}
Motivated by problems of comparative genomics and paleogenomics,
in~\cite{chauve-on} the authors introduced the Gapped Consecutive-Ones
Property Problem $(k,\delta)$-C1P: given a binary matrix $M$ and two
integers $k$ and $\delta$, can the columns of $M$ be permuted such
that each row contains at most $k$ blocks of ones and no two
consecutive blocks of ones are separated by a gap of more than
$\delta$ zeros.  The classical C1P problem, which is known to be
polynomial is equivalent to the $(1,0)$-C1P problem.  They showed that
the $(2,\delta)$-C1P Problem is NP-complete for all $\delta\geq 2$ and
that the $(3,1)$-C1P problem is NP-complete.  They also conjectured
that the $(k,\delta)$-C1P Problem is NP-complete for $k\geq 2$,
$\delta\geq 1$ and $(k,\delta)\neq (2,1)$.  Here, we prove that this
conjecture is true.  The only remaining case is the $(2,1)$-C1P
Problem, which could be polynomial-time solvable.
\end{abstract}

\section{Introduction} \label{sec-intro}

Let $M$ be a binary matrix with $n$ rows and $m$ columns. A {\em
  block\/} in a row of $n$ is a maximal sequence of consecutive
entries containing $1$. A {\em gap\/} is a sequence of consecutive
zeros that separates two blocks; the size of a gap is the length of
the sequence of zeros. $M$ is said to have the Consecutive-Ones
Property (C1P) if its columns can be permuted such that each row
contains one block (no gap then).  We call a permutation of the
columns of $M$ that witnesses this property a {\em consecutive-ones
  ordering\/} of $M$, and the resulting matrix of such a permutation
is {\em consecutive\/}.  Testing a binary matrix $M$ for the C1P can
be done in linear time~\cite{booth-testing,mcconnell-certifying}.
Matrix $M$ has the C1P if and only if a $PQ$-tree~\cite{booth-testing}
can be built for $M$, moreover, the $PQ$-tree stores all
consecutive-ones orderings of $M$.  The C1P has also been used in
molecular biology, in relation with physical
mapping~\cite{alizadeh-physical} and the reconstruction of ancestral
genomes~\cite{chauve-methodological} as follows: each column of the
matrix represents a genomic marker (sequence) that is believed to have
been present (up to small evolutionary changes such as nucleotide
mutations or small rearrangements) and unique in the considered
ancestral genome or physical map, and each row of the matrix
represents a set of markers that are believed to have been contiguous
along an ancestral chromosome, and the goal is to find one (or several
if possible) total orders on the markers that respect all rows (i.e.,
that keep all entries 1 consecutive in each
row). See~\cite{chauve-methodological} for a comprehensive
introduction to this problem. However, a common problem in such
applications is that matrices obtained from experiments do not have
the C1P~\cite{goldberg-four,chauve-methodological}.

Handling a matrix $M$ that does not have the C1P has been approached
using different points of view.  A first general approach consists of
transforming $M$ into a matrix that has the C1P, while minimizing the
modifications to $M$; such modifications can involve either in
removing rows, or columns, or both, or in flipping some entries from 0
to 1 or 1 to 0. In all cases, the corresponding optimization problems
have been proven NP-hard~\cite{dom-approximability,haj-note}. A second
approach consists of relaxing the condition of consecutivity of the
ones of each row, by allowing gaps, with some restriction to these
gaps. The question is then to decide if there is an ordering of the
columns of $M$ that satisfies these relaxed C1P conditions. As far as
we know, the only restriction that has been considered is the number
of gaps, either per row or in $M$.  In~\cite{goldberg-four}, the
authors introduced the notion of the $k$-{\em consecutive-ones
  property\/} ($k$-C1P).  A binary matrix $M$ has the $k$-C1P when its
set of columns can be permuted such that each row contains at most $k$
blocks.  They call a permutation of the columns of $M$ that witnesses
this property a $k$-{\em consecutive-ones ordering\/} of $M$, and the
resulting matrix of such a permutation is $k$-{\em consecutive\/}.
In~\cite{goldberg-four}, the authors show that deciding if a binary
matrix $M$ has the $k$-C1P is NP-complete, even if $k=2$.  Also,
finding an ordering of the columns that minimizes the number of gaps
in $M$ is NP-complete even if each row of $M$ has at most two
ones~\cite{haddadi-note}.

In the present work, we follow the second approach, motivated by the
problem of reconstructing ancestral genomes using max-gap
clusters~\cite{chauve-methodological}: the restrictions to the allowed
gaps are that both the number of gaps per row and the size of each gap
are bounded.  Formally, let $k$ and $\delta$ be two integers. A binary
matrix $M$ is said to have the $(k,\delta)$-Consecutive-Ones Property,
denoted by $(k,\delta)$-C1P, if its columns can be permuted such that
each row contains at most $k$ blocks and no gap larger than $\delta$.
Here, we call a permutation of the columns of $M$ that witnesses this
property a $(k,\delta)$-{\em consecutive-ones ordering\/} of $M$, and
the resulting matrix of such a permutation is $(k,\delta)$-{\em
  consecutive\/}.  In~\cite{chauve-on}, we introduced this problem and
gave preliminary complexity and algorithmic results. In particular we
showed that the $(2,\delta)$-C1P Problem is NP-complete for all
$\delta\geq 2$ and that the $(3,1)$-C1P problem is NP-complete. In the
present work, we settle the complexity for all possible values of $k$
and $\delta$: we show that testing for the $(k,\delta)$-C1P is
NP-complete for every $k\geq2,\delta\geq1$, $(k,\delta)\neq
(2,1)$. This leaves only one case open: the $(2,1)$-C1P Problem. Note
that from an application point of view (i.e., paleogenomics and the
reconstruction of ancestral genomes), answering the ($k$,$\delta$)-C1P
Problem for small values of both $k$ and $\delta$ is very
relevant. Indeed, in most cases, it is errors in computing the initial
matrix $M$ that makes it not have the C1P: these errors correspond to
small gaps in some rows of this matrix. These errors are due to small
overlapping genome rearrangements or mistakes in identifying proper
ancestral genomic markers.

In Section~\ref{sec-notation}, we introduce notations related to the
gapped-C1P problem. Then, in Section~\ref{sec-results}, we state and
prove our two main results. The main point in our proofs is a more
general result that states that, given an arbitrary binary matrix, one
can add a relatively small number of additional rows to the matrix
such that the order of a chosen subset of columns must be fixed if
some gaps conditions among these columns are to be respected.  We
believe this result can have applications in other problems related to
the C1P. Finally, we conclude with some open problems and
perspectives.

\section{Notation and Conventions} \label{sec-notation}

First, we introduce some notation and conventions that we use in the
following.  We have the binary matrix $M$ on the set $\{1,\dots,N\}$
of columns.  In the constructions used to show NP-completeness, we will
divide columns of the matrix into ordered sequences of blocks
$b_1,\dots,b_m$ by designing rows enforcing the columns of each block
to appear consecutive and the blocks to appear in the order
$b_1,\dots,b_m$ (or in the reversed order), i.e., for any $i<j$,
column $c \in b_i$ and $d \in b_j$, $c$ appears before $d$ in any
$(k,\delta)$-consecutive ordering of $M$ for any $k\geq 2,\delta\geq
1$.  Furthermore, the columns of a block $b_i$ will be denoted
$b_i^1,\dots,b_i^{|b_i|}$.

To specify a row in the matrix $M$, we use the convention of only
listing in the square brackets, the columns that contain $1$ in this
row.  For example, $[1,5,8]$ represents a row with ones in columns
$1$, $5$ and $8$, and zeroes everywhere else.  We will also use blocks
to specify columns in the block, for example, if $b_1 = \{1,2,3\}$,
then $[b_1,5]$ would mean $[1,2,3,5]$ and $[b_1 \setminus
\{b_1^2\},4,5]$ would mean $[1,3,4,5]$.

Given a column $i$ in matrix $M$ and an integer $d \geq 0$, the set of
columns $N_d(i) = \{i-d,\dots,i-1,i+1,\dots,i+d\}$ of $M$ is called
the $d$-{\em neighborhood\/} of $i$.

\section{Results} \label{sec-results}

First, we have the following important property of matrices which have
the $(k,\delta)$-C1P, for every $k \geq 2$, $\delta \geq 1$.

\begin{theorem} \label{thm-fix-order}
For all $k \geq 2, \delta \geq 1$ and $n \geq 2\delta + 3$, given
matrix $M$ on $N \geq n$ columns, $n(\delta+1)-\frac{\delta (\delta
+3)}{2} -1$ rows can be added to $M$ to force $n$ selected columns to
appear consecutive and in fixed order (or the reverse order) in any
$(k,\delta)$-consecutive ordering of $M$.
\end{theorem}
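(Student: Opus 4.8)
The plan is to give an explicit set of rows and then prove they are rigid. Relabel the selected columns as $c_1,\dots,c_n$ and aim to force the order $c_1,\dots,c_n$; since every gap/block condition is invariant under reversing the entire ordering, the reverse order is unavoidable and is all we can hope to exclude beyond it. The construction I propose is to add, for every pair of indices $i<j$ with $j-i\le\delta+1$, the two-entry row $[c_i,c_j]$. A direct count shows the number of such pairs is exactly $n(\delta+1)-\binom{\delta+2}{2}=n(\delta+1)-\frac{\delta(\delta+3)}{2}-1$, matching the claimed bound; it is convenient to note the telescoping identity $f(n)-f(n-1)=\delta+1$, which says each new column is paid for by exactly $\delta+1$ new rows, namely those linking it to its $\delta+1$ predecessors.

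First I would record the single operative constraint. In any $(k,\delta)$-consecutive ordering, writing $x_i$ for the position of $c_i$, a row $[c_i,c_j]$ has only two ones, so the columns between them form a single gap; the condition that no gap exceed $\delta$ forces $|x_i-x_j|\le\delta+1$. Since $k\ge2$, the block-count condition is never the binding one here (it only helps when $k$ is small), so the whole argument must go through using these distance inequalities alone --- this is what makes the result uniform in $k$. Soundness is immediate: in the target order each added row has its two ones at distance at most $\delta+1$, hence at most two blocks and a gap of size at most $\delta$, so the target order and its reverse satisfy every added row.

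The substance is rigidity: I must show that for $n\ge2\delta+3$ the only orderings of the $c_i$ consistent with all inequalities $|x_i-x_j|\le\delta+1$ (for $j-i\le\delta+1$) are the target and its reverse, and moreover that they leave no room for a foreign column strictly between two selected columns. I would argue by an extremal/exchange method: after possibly reversing, take the selected column occupying the least position and show it must be $c_1$, then peel off $c_2,c_3,\dots$ in turn, at each stage using that a would-be local reversal of a short segment, or a foreign insertion, pushes some near-pair $(c_i,c_j)$ with $j-i\le\delta+1$ to position-distance at least $\delta+2$, contradicting its row. The hypothesis $n\ge2\delta+3$ enters exactly here: it guarantees that every column --- including those near the two ends --- has a partner at index-distance $\delta+1$ on the relevant side, so the distance constraints chain into a rigid path with no slack to permute a short segment or to host an intruder. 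An equivalent and perhaps cleaner route is induction: verify rigidity directly on a core of $2\delta+3$ columns (the smallest size at which the constraints overlap enough to be rigid), then add one column at a time, where the $\delta+1$ rows tying $c_n$ to $c_{n-1},\dots,c_{n-\delta-1}$ force $x_n$ just past the already-tight block --- being within $\delta+1$ of $c_{n-\delta-1}$ caps $x_n$, being within $\delta+1$ of $c_{n-1}$ prevents a jump to the far end (possible only if the block were shorter than $\delta+2$, which $n\ge2\delta+3$ rules out), and the two together leave the unique adjacent slot, with no gap for a foreign column.

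I expect the main obstacle to be precisely this rigidity step --- in particular ruling out the short band shuffles and the boundary foreign-column insertions that are genuinely possible when $n<2\delta+3$ (for instance, with $\delta=1$ and $n=4$ the ordering $c_1,c_3,c_2,c_4$ satisfies every added row), so the proof must use the threshold $n\ge2\delta+3$ in an essential way rather than as a convenience. The remaining bookkeeping --- that the target order is reachable, that reversal is the only residual symmetry, and that the row count is exactly as stated --- is routine.
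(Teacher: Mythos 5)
Your construction is exactly the paper's (the rows $[c_i,c_j]$ for $j-i\le\delta+1$, with the same count), your ``operative constraint'' is precisely the paper's first claim (label-distance $\le\delta+1$ forces position-distance $\le\delta+1$), and your proposed induction on $n$ with base case $n=2\delta+3$ is the paper's own skeleton; your counterexample $c_1,c_3,c_2,c_4$ for $\delta=1$, $n=4$ correctly shows the threshold is essential. The inductive step you describe (the new column $c_n$ is pinned to the unique slot adjacent to $c_{n-1}$ by its rows to $c_{n-1}$ and $c_{n-\delta-1}$, using $n-1\ge 2\delta+2$) is sound and matches the paper's, which runs the mirror-image peeling of $c_1$.

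The genuine gap is that everything you call ``verify rigidity directly on a core of $2\delta+3$ columns'' is the actual content of the theorem, and it is asserted rather than proven; for general $\delta$ this is not a finite check. Two statements must be established there and neither follows from ``some near-pair gets stretched'': (i) \emph{consecutiveness} --- no foreign column of $M$ sits among the selected ones --- and (ii) the \emph{endpoint identification} --- the leftmost selected column in the ordering is $c_1$ or $c_n$, and more generally the first $\delta+1$ and last $\delta+1$ are forced. The missing idea, which drives both in the paper, is a pigeonhole comparison of $(\delta+1)$-neighborhoods in label space versus position space: a column $c_j$ with $\delta+2\le j\le n-\delta-1$ has all $2\delta+2$ labels of $N_{\delta+1}(j)$ inside $C$, and each must land within $\delta+1$ positions of $c_j$'s position, so it saturates its entire positional neighborhood with selected columns (giving consecutiveness by a count of the $2\delta+2$ ``border'' columns); dually, a column $c_j$ with $1<j<n$ has at least $\delta+2$ constrained label-neighbours but only $\delta+1$ positions on one side, so it cannot occupy an extremal position (giving $\pi^{-1}$ of the extremes $\in\{c_1,c_n\}$, and then inductively fixing $c_2,\dots,c_{\delta+1}$ and $c_{n-\delta},\dots,c_{n-1}$). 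Without this counting argument the extremal/exchange plan does not close, since the constraints alone do not obviously rule out, e.g., a non-monotone arrangement of the first few columns; supplying it would make your outline essentially identical to the paper's proof.
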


\begin{proof}
Given that $1,\dots,N$ are the columns of $M$, let $C =
\{i+1,i+2,\dots,i+n\}$, for some $i \leq N-n$ be the subset of $n$
columns that we want to force to appear consecutive and in this order
(or the reverse order) in any $(k,\delta)$-consecutive ordering of $M$
for any $k\geq 2,\delta \geq 1$.  Throughout the proof, when the
context is clear that we are referring only to the elements of $C$, we
denote $C = \{1,\dots,n\}$, and index its elements accordingly.

We add the rows $[i,j]$ to $M$, for any $1\leq i<j\leq n$ such that
$|i-j| \leq \delta+1$.  This amounts to adding $(n-(\delta +
1))(\delta +1) + \delta + (\delta -1) +\cdots +2+1 = (n-(\delta
+1))(\delta + 1) + \frac{\delta (\delta +1)}{2} = n\delta + n -
\frac{\delta^2}{2} - \frac{3\delta}{2}-1 =
n(\delta+1)-\frac{\delta(\delta+3)}{2}-1$ rows to $M$.  We now show
that the columns in $C$ appear in the sequence $1,\dots,n$, or
$n,\dots,1$ in any $(k,\delta)$-consecutive ordering of $M$.  If we
represent any $(k,\delta)$-consecutive ordering of $M$ by a
permutation $\pi$ of the columns of $M$, i.e., $\pi(i)$ is the $i$-th
column in the permuted matrix, $\pi(M)$ is the entire permuted matrix,
then we have the following claim.

\begin{claim} \label{clm-neighborhood}
For any $\pi(i),\pi(j) \in C$, if $|\pi(i)-\pi(j)| \leq \delta+1$ then
$|i-j| \leq \delta+1$.
\end{claim}

\begin{proof}
If $1\leq \pi(i),\pi(j) \leq n$ and $|\pi(i)-\pi(j)| \leq \delta+1$
then $M$ contains a row $[\pi(i),\pi(j)]$.  Hence, in the permuted
matrix, $\pi(M)$, we have a row $[i,j]$.  Since $\pi(M)$ is a
$(k,\delta)$-consecutive ordering of $M$, there can be at most
$\delta$ zeros between columns $i$ and $j$ in $\pi(M)$, and hence
$|i-j| \geq \delta+1$.
\end{proof}

Note that another way of stating this claim is: For any $\pi(i),\pi(j)
\in C$, if $\pi(j) \in N_{\delta+1}(\pi(i))$ then $j \in
N_{\delta+1}(i)$.

Next, we will show that the columns in $C$ have to appear consecutive
in any $(k,\delta)$-consecutive ordering of $M$.  Let $i_{\min}$
$(i_{\max})$ be the first (last) column in $\pi(M)$ containing a
column in $C$, i.e., $i_{\min} = \min_{c\in C}\pi^{-1}(c)$ and
$i_{\max} = \max_{c\in C}\pi^{-1}(c)$.  Then this consecutiveness
property can be expressed as follows.

\begin{claim} \label{clm-consec}
We have that $i_{\max}-i_{\min}=m-1$.
\end{claim}

\begin{proof}
Consider an $i \in M$ such that $\pi(i)$ is in the middle part of $C$,
in $C_{\mbox{MID}}=\{\delta+1,\dots,m-\delta-1\} \neq \emptyset$
($C_{\mbox{MID}} \neq \emptyset$ since $n \geq 2\delta+3$).
Obviously, $i_{\min} \leq i \leq i_{\max}$.  Then, for every $d \in
N_{\delta+1}(\pi(i))$, $d \in C$, and hence, $i_{\min} \leq
\pi^{-1}(d) \leq i_{\max}$, and by the first claim, also $\pi^{-1}(d)
\in N_{\delta+1}(i)$.  Since permutation $\pi$ is a one-to-one mapping
from the set $M$ to itself, and $|N_{\delta+1}(\pi(i))|$ is
$2\delta+2$ ($|N_{\delta+1}(i)|$ is $2\delta+2$), it follows that for
each $j$ such that $j \in N_{\delta+1}(i)$, there is a $d \in
N_{\delta+1}(\pi(i)) \subseteq C$ such that $\pi(j)=d$.  Hence, for
every $i$ such that $\pi(i) \in C_{\mbox{MID}}$, we have that for
every $j \in N_{\delta+1}(i)$, $\pi(j) \in C$.  Consequently, for
every such an $i$, $i \in I =
\{i_{\min}+\delta+1,\dots,i_{\max}-\delta-1\}$.

Let $i_1$ $(i_2)$ be the smallest (largest) $i$ such that $\pi(i) \in
C_{\mbox{MID}}$.  Recall that $i_1,i_2 \in I$.  Let $C_{\mbox{BOR}} =
C \setminus C_{\mbox{MID}}$.  Since, for all $j \in N_{\delta+1}(i_1)
\cup N_{\delta+1}(i_2)$, $\pi(j) \in C$, we have that
$\pi(i_1-\delta-1),\dots,\pi(i_1-1),\pi(i_2+1),\dots,\pi(i_2+\delta+1)
\in C_{\mbox{BOR}}$.  Note that these $2\delta+2$ elements in
$C_{\mbox{BOR}}$ are distinct, even if $i_1 = i_2$, the case that
arises when $n = 2\delta+3$.  By the definitions of $C_{\mbox{MID}}$,
$i_{\min}$ and $i_{\max}$, it follows that $\pi(i_{\min})$ and
$\pi(i_{\max})$ are also in $C_{\mbox{BOR}}$.  Hence, if either $i_1 >
i_{\min}+\delta+1$ or $i_2 < i_{\max}-\delta-1$, then we have at least
$2\delta+3$ distinct values from $C_{\mbox{BOR}}$, which is a
contradiction, since by the fact that $n \geq 2\delta+3$, and by the
definition of $C_{\mbox{MID}}$, $|C_{\mbox{BOR}}| = 2\delta+2$.
Therefore, $i_1=i_{\min}+\delta+1$, $i_2=i_{\max}-\delta-1$, and for
all $i \in \{i_{\min},\dots,i_{\max}\} \setminus I$, $\pi(i) \in
C_{\mbox{BOR}}$.  Thus for all $i \in I$, either $\pi(i) \in
C_{\mbox{MID}}$ or $\pi(i) \not\in C$.

If there is no $i \in \{i_{\min},\dots,i_{\max}\}$ such that $\pi(i)
\not\in C$, then all the elements in
$\pi(i_{\min}),\dots,\pi(i_{\max})$ are in $C$, and the claim follows.
Assume there is an $i$ such that $i \not\in C$, and let $i_0$ be the
smallest such $i$.  Since, for all $i \in \{i_{\min},\dots,i_{\max}\}
\setminus I$, $\pi(i) \in C_{\mbox{BOR}} \subseteq C$, it follows that
that $i_0 \in I$, where $i_0 \neq i_1$, by the definition of $i_1$.
Therefore, $i_0 > i_1 = i_{\min}+\delta+1$, and hence, $\pi(i_0-1) \in
C_{\mbox{MID}}$.  Since $i_0 \in N_{\delta+1}(i_0-1)$, it follows that
$i_0$ must also be in $C$, contradicting this assumption, thus the
claim follows.
\end{proof}

Now, by the previous claim, we have that the set of columns $C
\subseteq M$ are consecutive in any $(k,\delta)$-consecutive ordering
of $M$.  Given this, and the fact that any column of $M \setminus C$
is zero in any of these rows added to $M$ to force the columns of $C$
to be consecutive, this set of rows is $(k,\delta)$-consecutive for
any permutation of the columns of $M$, provided only that the columns
$C$ are consecutive somewhere in this ordering of $M$.  Hence, to
prove the theorem, it is sufficient to show that in the case that $M =
C = \{1,\dots,n\}$, the columns of $\pi(M)$ are ordered either in
increasing or decreasing order in any $(k,\delta)$-consecutive
ordering of $M$.

We will proceed by induction on $n$.  We need the following claim.

\begin{claim} \label{clm-order}
If $M = C$, then either for all $i \in
\{1,\dots,\delta+1,n-\delta,\dots,n\}$, $\pi(i)=i$ or for all $i \in
\{1,\dots,\delta+1,n-\delta,\dots,n\}$, $\pi(i)=n-i+1$.
\end{claim}

\begin{proof}
We will show the claim by induction on $i$.  In the base case, we need
to show that $\{\pi(1),\pi(n)\}=\{1,n\}$.  Assume that both $\pi(1)$
and $\pi(n)\not\in\{1,n\}$.  Then the set $N_{\delta+1}(\pi(1)) \cap
M$ has more that $\delta+1$ elements.  By the first claim, for every
$d \in N_{\delta+1}(\pi(1)) \cap M$, $\pi^{-1}(d) \in N_{\delta+1}(1)
\cap M$.  Since $\pi$ is a one-to-one mapping from the set $M$ to
itself, and $|N_{\delta+1}(\pi(1)) \cap M| > \delta+1$, then this
implies that $|N_{\delta+1}(1) \cap M| > \delta+1$.  This is a
contradiction, because $|N_{\delta+1}(1)| = \delta+1$.  Hence, either
$\pi(1) = 1$ or $\pi(1) = n$, and similarly, $\pi(n)=1$ or
$\pi(n)=n$.  Without loss of generality, we can assume that $\pi(1)=1$
and $\pi(n)=n$, and show by induction that the columns in $\pi(M)$ are
ordered in increasing order.

For the inductive step, consider an $i\leq \delta+1$ and assume that
$\pi(j)=j$ for every $j\in\{1,\dots,i-1,n-i+2,\dots,n\}$.  By the
induction hypothesis, $\pi(i)\in\{i,\dots,n-i+1\}$.  Assume that
$\pi(i)>i$ and $\pi(i)<n-i+1$.  Then the set $N_{\delta+1}(\pi(i))
\cap M$ has more than $\delta+i$ elements.  Again, by the first claim,
and the fact that $\pi$ is a one-to-one mapping, this implies that
$|N_{\delta+1(i)} \cap M| > \delta+i$, a contradiction.  Hence, either
$\pi(i)=i$ or $\pi(i)=n-i+1$.  Assume that $\pi(i)=n-i+1$.  By the
induction hypothesis, $\pi(n)=n$.  Obviously, then
$|\pi(n)-\pi(i)|=|n-(n-i+1)|=i-1\leq\delta+1$, and hence, by the first
claim, $|n-i|\leq\delta+1$.  Since $n \geq 2\delta+3$, and
$i\leq\delta+1$, then $|n-i|=n-i\geq 2\delta+3-(\delta+1)=\delta+2$,
which is a contradiction.  Thus, $\pi(i)=i$, and similarly,
$\pi(n-i+1)=n-i+1$.
\end{proof}

We now proceed by induction on $n$, to prove the theorem.  For the
base case, assume that $n=2\delta+3$.  By the last claim, for every
$i\in M\setminus\{\delta+2\}$, $\pi(i)=i$ ($\pi(i)=n-i+1$,
respectively).  It then follows, by the fact that $\pi$ is a
one-to-one mapping from the set $M$ to itself, that
$\pi(\delta+2)=\delta+2$.

Now, for induction, assume that $n>2\delta+3$.  Since $\delta\geq 1$,
by the last claim, either $\pi(1)=1$, $\pi(2)=2$ or $\pi(1)=m$,
$\pi(2)=m-1$.  Without loss of generality, assume that $\pi(1)=1$ and
$\pi(2)=2$.  Consider $M'$, the matrix that results from the removal
of column $1$ from $M$, and all rows $[1,i]$, for $i=2,\dots n$, from
this set of rows we add to $M$.  By the induction hypothesis, $M'$ is
$(k,\delta)$-consecutive, $k\geq 2,\delta\geq 1$, only for the orders
$\{2,\dots,n\}$ and $\{n,\dots,2\}$ of the columns of $M'$.  So if the
columns $M\setminus \{1\}$ are ordered $\{2,\dots,n\}$, since
$\pi(1)=1$, then the theorem holds.  Otherwise, the columns
$M\setminus \{1\}$ are ordered $\{n,\dots,2\}$, and thus $\pi(2)=m$,
which is a contradiction.  Thus the theorem holds.
\end{proof}

We now use this Theorem~\ref{thm-fix-order} to construct a reduction
from 3SAT to the problem of testing for the $(k,\delta)$-C1P to show
that this problem is NP-complete for every $k,\delta\geq 2$.

\begin{theorem} \label{thm-kdNPc}
Testing for the $(k,\delta)$-C1P is NP-complete for every
$k,\delta\geq 2$.
\end{theorem}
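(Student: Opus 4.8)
The plan is to reduce from 3SAT. Membership in NP is immediate: given a candidate permutation $\pi$ of the columns, one scans each row in linear time and checks that it has at most $k$ blocks and no gap exceeding $\delta$, so a valid ordering is a polynomial-size certificate. For hardness, from a 3SAT formula $\phi$ with variables $x_1,\dots,x_p$ and clauses $c_1,\dots,c_q$ I would build a matrix $\Mp$ that has the $(k,\delta)$-C1P if and only if $\phi$ is satisfiable. The backbone of the construction is Theorem~\ref{thm-fix-order}: I would partition most columns into an ordered sequence of blocks $b_1,\dots,b_m$ and, by adding the rows supplied by that theorem, force these blocks to appear consecutively and in the fixed order $b_1,\dots,b_m$ (or its reverse) in every $(k,\delta)$-consecutive ordering. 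This turns the ambient freedom of the permutation into a rigid ``number line'' on which the only remaining choices are the placements of a few designated columns.

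\textbf{Variable gadget.} For each variable $x_i$ I would dedicate a pair of columns, one for the literal $x_i$ and one for $\bar x_i$, together with a local sub-block of the scaffold that leaves exactly one free slot adjacent to the $b$-blocks coding that variable. Extra two-column rows of the form $[i,j]$ constrain these columns so that, in any valid ordering, the pair can sit in exactly one of two symmetric configurations, which I read off as $x_i$ true or $x_i$ false. The key point, guaranteed by Theorem~\ref{thm-fix-order} and Claim~\ref{clm-neighborhood}, is that no configuration other than these two can avoid creating a gap larger than $\delta$ among the fixed neighbours.

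\textbf{Clause gadget.} For each clause $c_j=(\ell_1\vee\ell_2\vee\ell_3)$ I would add one, or a constant number of, rows whose $1$-entries lie partly in fixed scaffold columns---arranged so that consecutive $1$s are separated by a gap of exactly $\delta+1$ zeros, which alone is forbidden---and partly in the three literal columns. The gadget is designed so that setting a literal true slides its column into the offending gap, splitting it into two gaps of size at most $\delta$ while keeping the number of blocks at most $k$; if all three literals are false, every column that could bridge the gap is pulled away and the row is forced either to contain a gap of size $\delta+1$ or to split into more than $k$ blocks. Because $\delta\geq 2$ and $k\geq 2$, there is enough room to realise both the bridged and the unbridged states, which is exactly why a single reduction works uniformly throughout this regime.

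\textbf{Correctness and obstacle.} The forward direction is routine: from a satisfying assignment I place each variable pair in its coding configuration, observe that the scaffold rows are satisfied because the blocks are consecutive, and verify that each clause row is bridged by a true literal. The harder, and I expect main, direction is the converse, namely that any $(k,\delta)$-consecutive ordering of $\Mp$ induces a consistent assignment satisfying every clause. Here Theorem~\ref{thm-fix-order} does the heavy lifting by pinning the scaffold, after which a neighbourhood counting argument in the spirit of Claims~\ref{clm-neighborhood}--\ref{clm-consec} shows the variable columns can occupy only the two intended slots, and the clause rows then force a bridging literal. The delicate part is tuning the exact gap sizes and the number of $1$s per row so that the bounds $k$ and $\delta$ cannot be satisfied simultaneously by any unintended placement; this bookkeeping, carried out once for general $k,\delta\geq 2$, is where the construction must be checked most carefully. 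Finally I would bound the added rows and columns---dominated by the $O(n(\delta+1))$ rows per invocation of Theorem~\ref{thm-fix-order}---to confirm that $\Mp$ has size polynomial in $|\phi|$, completing the reduction.
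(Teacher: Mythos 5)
Your overall strategy---reduce from 3SAT, check membership in NP by verifying a permutation, and use Theorem~\ref{thm-fix-order} to pin a rigid scaffold off which variable and clause gadgets hang---is the same as the paper's, and your variable gadget (a two-column block per variable whose internal arrangement encodes the truth value) is in the same spirit as the paper's blocks $b_i=\{2i-1,2i\}$. The genuine gap is in the mechanism of your clause gadget. You propose rows whose scaffold entries are separated by a gap of exactly $\delta+1$ zeros, to be ``bridged'' by a literal column sliding into that gap. But Theorem~\ref{thm-fix-order} forces the scaffold columns to appear \emph{consecutively}: no non-scaffold column can ever be placed between two scaffold columns, so a $(\delta+1)$-gap realized between two scaffold positions is unbridgeable and that row is violated in every ordering, whether or not $\phi$ is satisfiable. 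If, to escape this, you move one endpoint of the intended gap outside the scaffold, its distance to the other endpoint is no longer pinned at $\delta+1$ and the bridged/unbridged dichotomy collapses. The paper sidesteps this entirely: $\delta$ is used only to make the scaffold rigid, every gap occurring inside the scaffold has size $1$, and clause satisfaction is encoded through the \emph{block-count} bound $k$, exactly as in the $k$-C1P reduction of~\cite{goldberg-four}. Each row carries an alternating pattern of scaffold columns that already accounts for all but one or two of the $k$ permitted blocks, so its remaining ones---lying in a two-column variable block or a five-column clause block $B_j$ whose internal order is the only residual freedom---must coalesce into very few blocks; an unsatisfied clause then manifests as four columns of $B_j$ that cannot all be among its first three positions, not as an oversized gap.

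A secondary but real deficiency is that the gadgets are never instantiated: no rows are written down, and the step you yourself flag as ``the delicate part'' (tuning gap sizes and block counts so that no unintended placement survives) is precisely the substance of the reduction. As written, the argument is a plan built around a clause mechanism that conflicts with the very rigidity Theorem~\ref{thm-fix-order} provides, rather than a proof.
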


\begin{proof}
Let $\phi$ be a 3CNF formula over the $n$ variables
$\{v_1,\dots,v_n\}$, with $m$ clauses $\{C_1,\dots,C_m\}$.  We
construct a matrix $\Mp$ with $2n+d+5m$ columns and $n+6m+2d-3$ rows,
where $d=\max\{2k,5\}$, such that $\Mp$ has the $(k,\delta)$-C1P iff
$\phi$ is satisfiable for $k,\delta \geq 2$.

In~\cite{goldberg-four}, the authors show that, given a 3CNF formula
$\phi$, they can construct a matrix $\Mp$ that has the $k$-C1P iff
$\phi$ is satisfiable for $k\geq 2$.  Our construction is very similar
to this, with the extra condition that $\Mp$ cannot have any gap
larger than $\delta$.

To achieve this, we first force a subset of the columns of $M_{\phi}$
to be consecutive and in fixed order in any $(k,1)$-consecutive
ordering of $\Mp$, and then we will build off of this, a construction
similar to that of~\cite{goldberg-four}.  In particular, we impose
this order on the subset $\{2n+1,\dots,2n+d\}$ of the columns
$\{1,\dots,2n+d+5m\}$ of $\Mp$ by adding the
$d(\delta+1)-\frac{\delta(\delta+3)}{2}-1=2d-3$ rows $[i,j]$ to $\Mp$,
for any $2n+1\leq i<j\leq 2n+d$ such that $|i-j|\leq\delta+1$.  By
Theorem~\ref{thm-fix-order}, these $d$ columns must be in fixed order
(or the reverse).  We can assume the former without loss of
generality.

Now we associate variable $v_i$ with block $b_i=\{2i-1,2i\}$, for
$i=1,\dots,n$, imposing the same restrictions on these columns as
in~\cite{goldberg-four}.  So for each $b_i$, we add the row
$[b_i,b_{i+1},\dots,b_n,2n+1,2n+3,\dots,2n+2k-3,2n+2k-1]$ to $\Mp$.

Next we associate clause $C_j$ with block
$B_j=\{2n+d+5j-4,\dots,2n+d+5j\}$, for $j=1,\dots,m$, and add the row
$[2n+d-2k+2,2n+d-2k+4,\dots,2n+d-4,2n+d-2,2n+d,B_1,B_2,\dots,B_j]$ to
$\Mp$.

Now the columns of every $(k,\delta)$-consecutive ordering of the
matrix $\Mp$ are ordered: the blocks $b_1,\dots,b_n$, followed by the
$d$ columns $2n+1,\dots,2n+d$ that remain consecutive and in order,
followed by blocks $B_1,\dots,B_m$.  We now add the same rows to $\Mp$
as in~\cite{goldberg-four} to associate each clause to its three
variables to properly simulate 3SAT, only that within the segment of
$d$ columns $2n+1,\dots,2n+d$, each row takes value
$[2n+2k-5,2n+2k-3,2n+2k-2,\dots,2n+d]$.  The idea is that this segment
of $d$ columns enforces $k-2$ gaps, while each gap is of size $1$.
\end{proof}

Finally, we slightly modify the construction in the proof of
Theorem~\ref{thm-kdNPc}, to show that testing for the $(k,1)$-C1P is
NP-complete for every $k\geq 3$ by reduction from 3SAT.

\begin{theorem} \label{thm:k1NPc}
Testing for the $(k,1)$-C1P is NP-complete for every $k\geq 3$.
\end{theorem}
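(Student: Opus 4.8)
The plan is to reuse, almost verbatim, the construction of Theorem~\ref{thm-kdNPc}, adapting only the component that relied on being able to leave a gap of size~$2$. Recall that in that reduction the matrix $\Mp$ is organized, in every $(k,\delta)$-consecutive ordering, into three consecutive regions: the variable blocks $b_1,\dots,b_n$, the fixed segment $\{2n+1,\dots,2n+d\}$ whose internal order is pinned by Theorem~\ref{thm-fix-order} (already invoked there through its $\delta=1$ instance, contributing $2d-3$ rows), and the clause blocks $B_1,\dots,B_m$. The purpose of the segment is to force each $3$SAT-simulation row to spend a fixed number of its $k$ blocks inside the segment, leaving only a small constant residual budget of blocks for the part of the row that genuinely encodes satisfiability. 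For $\delta\ge2$ that residual budget is two, matching the fact that the plain $2$-C1P of \cite{goldberg-four} already encodes $3$SAT; for $\delta=1$ we can no longer spend the unit of gap that the $\delta\ge2$ construction used, so we must leave a residual budget of \emph{three}. This is precisely why the statement drops from $k\ge2$ to $k\ge3$, and why the residual instance becomes the $(3,1)$-hard gadget of \cite{chauve-on} rather than the $(2,\cdot)$ one — and, correspondingly, why the remaining case $(2,1)$ escapes the argument entirely.

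Concretely, I would first keep the forcing of the segment and the three-region layout unchanged, so that the reasoning of Theorem~\ref{thm-fix-order} still guarantees that in any $(k,1)$-consecutive ordering the regions appear in order and the segment is internally arranged $2n+1,\dots,2n+d$. I would then shift the segment pattern of every simulation row by one column, so that within the segment each such row is forced to realize one fewer gap than before, namely $k-3$ instead of $k-2$; all of these enforced gaps remain of size exactly~$1$ and hence legal for $\delta=1$, while one additional block is freed up. Into this enlarged residual budget I would plug the clause-to-variable rows, rewritten so that every gap they create has size~$1$: wherever the $\delta\ge2$ construction used a single gap of size~$2$ to \emph{skip over} an unsatisfied literal, I replace it by a gap of size~$1$ together with one extra block, which the budget of three now accommodates.

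The correctness argument then splits into the two usual directions, plus the trivial observation that a column permutation is a polynomial-size certificate checkable in linear time, so the problem is in NP. For completeness I would take a satisfying assignment of $\phi$, orient each variable block $b_i$ accordingly, and exhibit an ordering in which every row has at most $k$ blocks and every gap has size~$1$: the segment contributes its forced $k-3$ size-$1$ gaps, and each clause row, having at least one satisfied literal, closes up within the three-block, size-$1$ budget. For soundness I would argue that any $(k,1)$-consecutive ordering of $\Mp$ must, because of the forced segment, spend $k-3$ blocks there, leaving at most three blocks and only size-$1$ gaps for the clause-variable portion; the same counting as in \cite{goldberg-four,chauve-on} then shows that a clause block can be fitted only when one of its literals is satisfied, so the ordering yields a satisfying assignment of $\phi$.

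The step I expect to be the main obstacle is the tightness of the block accounting once $\delta=1$ removes all slack from the gaps. With $\delta\ge2$ a slightly miscounted gap could be absorbed silently, but at $\delta=1$ any gap that the segment or the clause gadget fails to keep at size~$1$ is an outright violation. I must therefore verify that the shifted segment forces \emph{exactly} $k-3$ gaps — never fewer, which would free a block and admit orderings with no underlying satisfying assignment, and never more, which would reject orderings that should be valid — and that the size-$1$-only rewriting of the clause gadget neither introduces a spurious short ordering nor destroys an intended one. Making this exact count close at the new residual budget of three, rather than two, is the crux of the adaptation.
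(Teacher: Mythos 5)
Your high-level strategy --- fix the segment $\{2n+1,\dots,2n+d\}$ via Theorem~\ref{thm-fix-order}, make every literal row burn a prescribed number of blocks inside it, and leave a constant residual budget that encodes the clause --- is exactly the skeleton of the paper's proof. But the entire content of the reduction lies in the clause gadget, and that is the part you never construct. The paper gives it explicitly: each clause block $B_j$ carries four distinguished columns $B_j^1,\dots,B_j^4$, and the three literal rows end in the pairs $\{B_j^1,B_j^2\}$, $\{B_j^1,B_j^3\}$, $\{B_j^1,B_j^4\}$ respectively; a false literal (a reversed variable block $b_\alpha$) costs one extra block at the variable end of the row, which forces the corresponding pair to sit among the first three columns of $B_j$ on pain of creating either a gap of size $2$ or a $k$-th gap, and the theorem then follows from the pigeonhole fact that the four columns $B_j^1,\dots,B_j^4$ cannot all occupy three positions. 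Your substitute --- ``replace the size-$2$ gap by a size-$1$ gap plus one extra block'' and ``plug in the $(3,1)$-hard gadget of \cite{chauve-on}'' --- is not a construction: you give no rows for $B_j$, no mechanism by which the three literals jointly constrain a single ordering of $B_j$'s columns, and no verification that the prior paper's $(3,1)$ gadget (which has its own global layout) still functions when each of its rows is prefixed by the long forced segment and the variable region.

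The block accounting that you yourself flag as ``the crux'' is also inconsistent as written. You say in one place that the shifted segment forces $k-3$ gaps, in another that it spends $k-3$ blocks, and in a third that three blocks remain for the clause portion; but $k-3$ gaps inside a stretch that is contiguous with the variable region already produce $k-2$ blocks before the clause region begins (and $k-1$ when the literal is false), leaving a residual of two blocks, not three. Until the segment pattern of the literal rows and the columns of $B_j$ appearing in each are written down explicitly and this count is made to close in both the satisfiable and unsatisfiable directions, the reduction is not established.
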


\begin{proof}
Let $\phi$ be a 3CNF formula over the $n$ variables
$\{v_1,\dots,v_n\}$, with $m$ clauses $\{C_1,\dots,C_m\}$.  We
construct a matrix $\Mp$ with $2n+d+4m$ columns and $n+4m+2d-3$ rows,
where $d=\{2k,5\}$, such that $\Mp$ has the $(k,1)$-C1P iff $\phi$ is
satisfiable for $k\geq 3$.  We do this as follows.

We again associate columns $1,\dots,2n$ with the variables of $\phi$,
and again use Theorem~\ref{thm-fix-order} to force the subset
$\{2n+1,\dots,2n+d\}$ of the columns $\{1,\dots,2n+d+4m\}$ of $\Mp$ to
appear consecutive and in fixed order in any $(k,1)$-consecutive
ordering of $\Mp$ for $k\geq 2$.

We associate each clause $C_j \in \{C_1,\dots,C_m\}$, with block $B_j
= \{2n+d+4j-4,\dots,2n+d+4j\}$.  Now, we need to introduce only three
more rows to associate the clauses to their variables to properly
simulate 3SAT.  Suppose that clause $C_j$ contains the literal
$v_{\alpha}$.  As such, we add the row
$[2\alpha,2\alpha+1,\dots,2n+1,2n+3,2n+5,\dots,2n+2k-5,2n+2k-3,2n+2k-2,2n+d,B_j^1,B_j^2]$
to $M_{\phi}$.  If $v_{\alpha}$ is false, this forces $B_j^1$ and
$B_j^2$ to be among the first three columns of block $B_j$ in any
$(k,1)$-consecutive ordering of $\Mp$ for $k\geq 3$.  Note that any
other ordering of the columns of $B_j$ would introduce either a gap of
size $2$, or a $k$-th gap in this row.  If another literal in $C_j$ is
$v_{\beta}$, we add the row
$[2\beta,2\beta+1,\dots,2n+1,2n+3,2n+5,\dots,2n+2k-5,2n+2k-3,2n+2k-2,2n+d,B_j^1,B_j^3]$
to $\Mp$.  If $v_{\beta}$ is false, this forces $B_j^1$ and $B_j^3$ to
be among the first three columns of block $B_j$ in any
$(k,1)$-consecutive ordering of $\Mp$ for $k\geq 3$.  If $v_{\gamma}$
is the third literal of $C_j$, we add the row
$[2\gamma,2\gamma+1,\dots,2n+1,2n+3,2n+5,\dots,2n+2k-5,2n+2k-3,2n+2k-2,2n+d,B_j^1,B_j^4]$
to $\Mp$.  If $v_{\gamma}$ is false, this forces $B_j^1$ and $B_j^4$
to be among the first three columns of block $B_j$ in any
$(k,1)$-consecutive ordering of $\Mp$ for $k\geq 3$.  Finally, since
$B_j^1,B_j^2,B_j^3,B_j^4$ cannot simultaneously be among the first
three columns of block $B_j$, we have that not all three literals can
be false in any $(k,1)$-consecutive ordering of $\Mp$ for $k\geq 3$.
It is easy to show, that if any literal in $C_j$ is true, then there
is some $(k,\delta)$-consecutive ordering of the rows involving block
$B_j$.
\end{proof}

\section{Conclusion} \label{sec-concl}

While this work improves on the most interesting open question given
in~\cite{chauve-on}, there still remain several open questions.  The
remaining open question that is most interesting now is the complexity
of deciding the $(2,1)$-C1P for a binary matrix $M$.  Since the two
NP-completeness constructions presented here force either a gap of
size two, or at least two gaps of size one in any legal configuration
of $M$, if testing for the $(2,1)$-C1P is NP-complete, it would
certainly require a different type of construction.

Deciding the $k$-C1P, for $k \geq 2$ has been proven NP-complete
in~\cite{goldberg-four}, and we have shown that deciding the
$(k,\delta)$-C1P is NP-complete for $k \geq 2, \delta \geq 1,
(k,\delta) \neq (2,1)$.  However, the complexity of deciding the
gapped C1P when only $\delta$ is fixed (we call this the
$(*,\delta)$-C1P) is still an interesting open question.  We have a
preliminary proof that deciding the $(*,\delta)$-C1P is NP-complete
for all $\delta \geq 1$, by reducing from the version of 3SAT where
each variable appears at most twice positively and once negatively.

Another natural problem is the $(k,\delta)$-C1P Problem considered
here, but with a third parameter added, namely the maximum number of
entries $1$ that can be present in a row of $M$, called the {\em
  degree} of $M$. This problem is motivated by the fact that in the
framework described in~\cite{chauve-methodological}, it is possible to
constrain matrices used to reconstruct ancestral genomes to have a
small degree. Note that with matrices of degree $2$, the number of
gaps can be at most $1$, and the $(2,\delta)$-C1P problem is then
equivalent to the problem of deciding if the graph whose incidence
matrix is $M$ has bandwidth at most $(\delta+1)$. For $\delta=1$, the
graph bandwidth problem can be solved in linear
time~\cite{caprara-on}, while in~\cite{saxe-dynamic} a dynamic
programming algorithm with time and space complexity exponential in
$\delta$ was described. We adapted in~\cite{chauve-on} this algorithm
for testing the $(k,\delta)$-C1P for matrices of small degree, but the
exponential space complexity makes it difficult to use in practice on
matrices with degree greater than $3$.  However, deciding the
$(k,\delta)$-C1P for small values of $k$ and $\delta$ may become
tractable if the degree of the matrix is bounded as well.  The design
of efficient algorithms, both in time and space, for deciding the
gapped consecutive-ones property is a promising research avenue, with
immediate applications in genomics.

Adding the degree of the matrix as a third parameter (we call it $d$
here) to the problem of deciding the $(k,\delta)$-C1P to give the new
problem of deciding the $(d,k,\delta)$-C1P then introduces more
interesting open questions from a complexity theory perspective. We
know that deciding the $(d,k,\delta)$-C1P is polynomial-time solvable
by the above algorithm, and in fact, this problem where $k$ is
unbounded is just the $(d,d,\delta)$-C1P, because $k \leq d$.  The
complexity of deciding this property when $\delta$ is unbounded,
namely the $(d,k,*)$-C1P is still open.  We have a preliminary proof
that deciding the $(d,k,*)$-C1P, for all $d \geq 4, k \geq 3$ is
NP-Complete, by a reduction from 3SAT, leaving open the complexity of
deciding the $(4,2,*)$-C1P and the $(3,2,*)$-C1P.  While this implies
that this problem is intractible in general, in practice, $\delta$ and
$d$ are quite small, so the design of efficient algorithms for these
cases can still be a fruitful avenue of research.

From a purely combinatorial point of view, there has been a renewed
interest in the characterization of non-C1P matrices in terms of
forbidden submatrices introduced by Tucker~\cite{tucker-structure}. It
has recently been shown that this characterization could be used in
the design of algorithms related to the
C1P~\cite{dom-recognition,chauve-minimal}. The question there is the
following: is there a nice characterization of non $(k,\delta)$-C1P
matrices in terms of forbidden matrices?

Finally it is also natural to ask if there exists a structure that can
represent all orderings that satisfy some gaps conditions related to
the consecutive-ones property. Such a structure exists for the
ungapped C1P: for a matrix that has the C1P, its PQ-tree represents
all its valid consecutive orderings, and it can be computed in linear
time~\cite{mcconnell-certifying}. This notion has even been extended
to matrices that do not have the C1P through the notion of
PQR-tree~\cite{meidanis-on,mcconnell-certifying}. Although the
existence of such a structure with nice algorithmic properties is
ruled out by the hardness of deciding the gapped C1P, it remains open
to find classes of matrices such that deciding the gapped C1P is
tractable, and in such case, to represent all possible orderings in a
compact structure. Here again, this question is motivated both by
theoretical considerations (for example representing all possible
layouts of a graph of bandwidth $2$), but also by computational
genomics problems~\cite{chauve-methodological}.


\end{document}